\newcommand{\norm}[1]{\left\lVert#1\right\rVert}
\newtheorem{theorem}{Theorem}
\newtheorem{example}{Example}
\newcommand{\modop}{\,\, {\rm mod }\,\,}
\begin{document}

\title{A practical attack to Bouftass's cryptosystem}

\author{Yang Zhang}

\maketitle
\bigskip{}

{

\centering Texas Tech University

\centering Lubbock, TX, United States

\centering yang22.zhang@ttu.edu

}

\bigskip{}

\bigskip{}

\begin{abstract}
Recently, a new fast public key exchange protocol was presented by S. Bouftass.
The protocol is based on the difficulty of inverting the function
$F(x)=\lfloor (zx \modop 2^p)/ 2^q \rfloor$. In this paper, we describe a
practical attack against this protocol based on Closest Vector Problem (CVP) and Gaussian lattice reduction.
\end{abstract}
\smallskip
\noindent \textbf{Keywords}\quad public key exchange\quad cryptoanalysis\quad CVP\quad Gaussian lattice reduction

\section{Introduction}

In public key cryptography, the security of traditional methods is based on number theoretic problems, and suffers from high computational cost due to problems such as dealing with large numbers. Each user in a public key system has a pair of cryptographic keys, consisting a public key and a private key. These are related through a hard mathematical inversion problem, so that the private key cannot be feasibly derived from the public key.   A standard implementation of public key cryptography is based on the Diffie-Hellman key agreement protocol\cite{key-1}. This protocol allows two users, Alice and Bob, to exchange a secret key over an insecure communication channel. It can be described as following:
\begin{enumerate}
\item Alice and Bob openly agree upon a large prime $p$ and $g\in\mathbb{Z}_{p}^*$.
\item Alice randomly chooses the secret integer $a\in [1, p-1]$.
\item Alice computes $A=g^{a}\mod p$, and publishes $A$.
\item Bob randomly chooses the secret integer $b\in [1, p-1]$,
\item Bob computes $B=g^{b}\mod p$, and publishes $B$.
\item Alice computes the secret integer $K_{A}=B^{a}\mod p=g^{ba}\mod p$.
\item Bob computes the secret integer $K_{B}=A^{b}\mod p=g^{ab}\mod p$.
\end{enumerate}

Then Alice and Bob can get the same shared secret key $K=K_{A}=K_{B}$. The eavesdropper Eve knows $p,g,A$ and $B$, and she needs
to compute the secret key $K$. For this, it suffices to solve one of the discrete logarithm
problems:
\begin{eqnarray*}
A=g^{a}\mod p \qquad\text{and}\qquad B=g^{b}\mod p
\end{eqnarray*}
for the unknowns $a$ or $b$. If $p$ is a very large prime of say 2048 bits, then the problem
becomes computationally hard, and
it is considered infeasible. For maximum security $p$ should be a safe prime,
i.e. $(p-1)/2$ is also a prime, and $g$ a primitive root of $p$ \cite{key-2}.

 Recently, to construct a cryptosystem which is not based on number theory, S.Bouftass described a new public key exchange protocol relying on the difficulty of inverting the function $F(x)=\lfloor (zx \modop 2^p)/ 2^q \rfloor$ \cite{key-3}.  In our work, we find that this system is not secure, we can easily break this system based on the closest vector problem (\cite{key-4},\cite{key-5}) and Gaussian lattice reduction\cite{key-4}. This paper is organized as follows, in section 2 we give a general description of S.Bouftass's new protocol; section 3 gives our method to break this system and an example; The last section is conclusion.

\section{S.Bouftass's new public key exchange cryptosystem}

Throughout, if $n$ is an integer and $s\in \mathbb{N}$, we use $a\mod n$ to denote the nonnegative reminder of $a$ divided by $n$. We will use the same notation as in \cite{key-3} to exchange the secret key. Alice and Bob should agree on some integers:
$l,\, m,\, p,\, q,\, r,\, z$, where $z$ is $l$ bits long, $p+q=l+m,\,$ $p>m+q+r$, and $r>128$.
The protocol is then described as follows,

\begin{enumerate}
\item Alice and Bob agree upon the integers $l, m, p, q, r, z$. Alice randomly selects a private $m$ bit positive integer $x$, and Bob selects a private $m$ bit positive integer $y$.
\item Alice computes $U=\left\lfloor\frac{(xz)\mod2^p}{2^q}\right\rfloor$ and sends it to Bob.
\item Bob computes $V=\left\lfloor\frac{(yz)\mod2^p}{2^q}\right\rfloor$ and sends it to Alice.
\item Alice computes $W_a=\left\lfloor\frac{(xV)\mod2^{p-q}}{2^{r+m}}\right\rfloor$.
\item Bob computes $W_b=\left\lfloor\frac{(yU)\mod2^{p-q}}{2^{r+m}}\right\rfloor$.
\item The shared secret key is $K=W_a=W_b$ when $r>128$.
\end{enumerate}

\section{Practical attack to this cryptosystem }

Let $l,\, m,\, p,\, q,\, r$ and $z$ be fixed as above and let $F(x)=\left\lfloor\frac{(xz)\mod2^p}{2^q}\right\rfloor=u$, then we have \begin{equation*} 2^qu+y=xz \mod 2^p \end{equation*}for some integer $y$ with $0\leq y<2^q$, i.e.
\begin{equation}\label{eq:3} xz\equiv 2^qu+y \qquad(\modop 2^p).\end{equation}
Hence, finding an element  $x\in F^{-1}(\{u\})$ is equivalent to finding a proper vector $\left[\begin{array}{c}
x\\y\\
\end{array}\right]$ that satisfies equation ~\eqref{eq:3}, and $0\leq y<2^q$.

\begin{theorem}

All solutions to equation ~\eqref{eq:3} are of the form $\left[\begin{array}{c}
x\\y\\
\end{array}\right]=\left[\begin{array}{c}
x_0+\alpha_1x_1+\alpha_2x_2\\y_0+\alpha_1y_1+\alpha_2y_2\\
\end{array}\right]$, where $\alpha_1, \alpha_2\in \mathbb{Z}$ and \begin{equation*} x_0=\left\lceil\frac{2^qu}{z}\right\rceil,\qquad y_0=zx_0-2^qu;\end{equation*}
\begin{equation*}x_1=\left\lfloor\frac{2^qu}{z}\right\rfloor,\qquad y_1=zx_1-2^p;\end{equation*} \begin{equation*}x_2=\left\lfloor\frac{2^qu}{z}\right\rfloor+1,\qquad y_2=zx_2-2^p.\end{equation*}
\end{theorem}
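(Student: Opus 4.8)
The plan is to recognize the solution set of~\eqref{eq:3} as a coset of a rank-two integer lattice and to exhibit the two displayed direction vectors as an honest basis of that lattice. First I would rewrite~\eqref{eq:3} in homogeneous-plus-particular form: a pair $(x,y)^T\in\mathbb{Z}^2$ solves~\eqref{eq:3} precisely when $xz-y-2^qu\equiv 0\pmod{2^p}$. I then introduce the homogeneous lattice
\[
 L=\{(a,b)^T\in\mathbb{Z}^2 : az\equiv b\pmod{2^p}\},
\]
which has the obvious basis $(1,z)^T,(0,2^p)^T$ and hence covolume (determinant) $2^p$, i.e.\ index $[\mathbb{Z}^2:L]=2^p$. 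The target is then to show that the full solution set of~\eqref{eq:3} is exactly the coset $(x_0,y_0)^T+L$ and that $\{(x_1,y_1)^T,(x_2,y_2)^T\}$ is a basis of $L$.

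Next I would verify the particular solution by direct substitution. From $y_0=zx_0-2^qu$ we get $x_0z-y_0=2^qu$, so $x_0z\equiv 2^qu+y_0\pmod{2^p}$ and $(x_0,y_0)^T$ indeed satisfies~\eqref{eq:3}. For an arbitrary solution $(x,y)^T$, subtracting this identity gives $(x-x_0)z-(y-y_0)\equiv 0\pmod{2^p}$, i.e.\ $(x-x_0,y-y_0)^T\in L$; conversely, adding any element of $L$ to $(x_0,y_0)^T$ preserves the congruence. This pins the solution set down to precisely the coset $(x_0,y_0)^T+L$.

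It then remains to show that $\{(x_1,y_1)^T,(x_2,y_2)^T\}$ generates $L$. Writing $d=\lfloor 2^qu/z\rfloor$, a one-line computation gives $x_1z-y_1=2^p$ and $x_2z-y_2=2^p$, so both vectors lie in $L$. The decisive step, and the one I expect to be the only real obstacle, is to confirm that they generate all of $L$ rather than a proper sublattice. I would settle this with the determinant criterion: two vectors of the full-rank lattice $L$ form a basis of $L$ if and only if the absolute value of their determinant equals $\det L=2^p$. Expanding $x_1y_2-x_2y_1$, the quadratic terms in $z$ cancel and leave exactly $2^p$, which proves the claim. As a sanity check I would also note the purely combinatorial route: since $(x_2,y_2)^T-(x_1,y_1)^T=(1,z)^T$ and $(x_1,y_1)^T=d\,(1,z)^T-(0,2^p)^T$, two elementary unimodular operations carry the pair to the known basis $\{(1,z)^T,(0,2^p)^T\}$, confirming they span the same lattice.

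Combining the three steps, every solution of~\eqref{eq:3} has the form $(x_0,y_0)^T+\alpha_1(x_1,y_1)^T+\alpha_2(x_2,y_2)^T$ with $\alpha_1,\alpha_2\in\mathbb{Z}$, as asserted. Everything outside the basis verification is routine substitution; the one place I would be careful is to invoke explicitly the standard fact linking the determinant of two lattice vectors to their being a basis, since that is exactly what upgrades "both vectors lie in $L$" to "they generate $L$."
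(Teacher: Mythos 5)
Your proof is correct, and it takes a genuinely different route from the paper's. The paper argues by direct computation: starting from an arbitrary solution $(\hat{x},\hat{y})$ it writes $(\hat{x}-x_0)z=n2^p+\hat{y}-y_0$, divides by $z$, observes that $N=C+\frac{\hat{y}-y_0}{z}$ must be an integer, and manufactures $\alpha_1=n-N$, $\alpha_2=N$ by hand; the particular-plus-homogeneous decomposition and the lattice $L$ never appear explicitly. You instead identify the solution set as the coset $(x_0,y_0)^T+L$ and certify that $\{(x_1,y_1)^T,(x_2,y_2)^T\}$ is a basis of $L$ via the index--determinant criterion ($|x_1y_2-x_2y_1|=2^p=[\mathbb{Z}^2:L]$), backed up by the explicit unimodular reduction to $\{(1,z)^T,(0,2^p)^T\}$, which makes the argument self-contained even without citing the determinant fact. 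Your version is cleaner and more robust: the determinant computation uses only $x_2=x_1+1$ and $y_i=zx_i-2^p$, so it is indifferent to the actual value of $x_1$ --- which incidentally absorbs a discrepancy in the paper, whose statement sets $x_1=\left\lfloor 2^qu/z\right\rfloor$ while its proof silently works with $\left\lfloor 2^p/z\right\rfloor$; both choices yield a valid basis for exactly the reason your argument isolates. What the paper's computation buys in exchange is complete elementarity, at the cost of some opaque bookkeeping around the constant $C$ and the integrality of $N$.
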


\begin{proof}
Let $x_0, y_0, x_1, y_1$ and $x_2, y_2$ be the values that are defined above, then it is obvious that for all integers $\alpha_1$ and $\alpha_2$ we always have
\begin{equation*}(x_0+\alpha_1x_1+\alpha_2x_2)z\equiv 2^qu+(y_0+\alpha_1y_1+\alpha_2y_2)\qquad (\modop 2^p),\end{equation*} i.e. all vectors of the form $\left[\begin{array}{c}
x\\y\\
\end{array}\right]=\left[\begin{array}{c}
x_0+\alpha_1x_1+\alpha_2x_2\\y_0+\alpha_1y_1+\alpha_2y_2\\
\end{array}\right]$ are solutions to equation ~\eqref{eq:3} for $\forall \alpha_1, \alpha_2\in \mathbb{Z}$.

On the other hand, since \begin{equation*} x_0z\equiv 2^qu+y_0 \qquad(\modop 2^p),\end{equation*} let$\left[\begin{array}{c}
\hat{x}\\ \hat{y}\\
\end{array}\right]$ be an arbitrary solution to equation ~\eqref{eq:3}, then \begin{equation*} (\hat{x}-x_0)z\equiv \hat{y}-y_0 \qquad(\modop 2^p).\end{equation*}
Hence there exists $n\in \mathbb{Z}$, such that \begin{equation}\label{eq:4} (\hat{x}-x_0)z=n2^p+ \hat{y}-y_0 ,\end{equation} i.e.
 \begin{equation*}
\hat{x}-x_0=\frac{n2^p+ \hat{y}-y_0}{z}=n\left\lfloor \frac{2^p}{z}\right\rfloor+C+\frac{\hat{y}-y_0}{z}
\end{equation*} for some number $C\in\mathbb{R}$.

Since $\hat{x}-x_0$ is integer, $C+\frac{\hat{y}-y_0}{z}$ should also be an integer, call it $N$. Now we have
\begin{equation*}
\hat{x}-x_0=n\left\lfloor \frac{2^p}{z}\right\rfloor+N
=(n-N)\left\lfloor \frac{2^p}{z}\right\rfloor+N\left(\left\lfloor \frac{2^p}{z}\right\rfloor+1\right).
\end{equation*}
Let $n-N=\alpha_1$ and $N=\alpha_2$, we can get \begin{equation}\label{eq:5} \hat{x}-x_0=\alpha_1x_1+\alpha_2x_2.\end{equation}
Next, combining equations ~\eqref{eq:4} and ~\eqref{eq:5} we have
\begin{equation*}
\begin{split}
\hat{y}-y_0&=(\alpha_1x_1+\alpha_2x_2)z-n2^p\\
&=\alpha_1(zx_1-2^p)+\alpha_2(zx_2-2^p)\\
&=\alpha_1y_1+\alpha_2y_2.
\end{split}
\end{equation*}
So $\left[\begin{array}{c}
\hat{x}\\ \hat{y}\\
\end{array}\right]=\left[\begin{array}{c}
x_0+\alpha_1x_1+\alpha_2x_2\\y_0+\alpha_1y_1+\alpha_2y_2\\
\end{array}\right].$

\end{proof}

\begin{theorem}

The following algorithm can be used to find a minimal solution $ \left[\begin{array}{c}
x\\y\\
\end{array}\right]$ of equation ~\eqref{eq:3}, with respect to the norm induced by an arbitrary inner product $\langle -, -\rangle$ on $\mathbb{R}^2$.
\end{theorem}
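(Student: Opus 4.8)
The plan is to recognize the solution set described in the preceding theorem as a coset of a rank-two lattice, and then to identify ``finding a minimal solution'' with the Closest Vector Problem, which is exactly solvable in dimension two. Writing $\mathbf{b}_1=(x_1,y_1)$, $\mathbf{b}_2=(x_2,y_2)$ and $\mathbf{v}_0=(x_0,y_0)$, the preceding theorem says precisely that the solution set of~\eqref{eq:3} is the coset $\mathbf{v}_0+L$, where $L=\mathbb{Z}\mathbf{b}_1+\mathbb{Z}\mathbf{b}_2$. A direct computation gives $x_1y_2-x_2y_1=2^p\neq 0$, so $\mathbf{b}_1,\mathbf{b}_2$ are linearly independent and $L$ is a genuine two-dimensional lattice. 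Consequently, minimizing $\norm{\mathbf{v}_0+\alpha_1\mathbf{b}_1+\alpha_2\mathbf{b}_2}$ over $\alpha_1,\alpha_2\in\mathbb{Z}$ is the same as finding the point of $L$ closest to the target $-\mathbf{v}_0$ in the metric induced by $\langle-,-\rangle$.

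First I would reduce the basis. I would run Gaussian (Lagrange) lattice reduction on $\{\mathbf{b}_1,\mathbf{b}_2\}$ to obtain a reduced basis $\{\mathbf{c}_1,\mathbf{c}_2\}$ satisfying $\norm{\mathbf{c}_1}\le\norm{\mathbf{c}_2}$ and $2\lvert\langle\mathbf{c}_1,\mathbf{c}_2\rangle\rvert\le\norm{\mathbf{c}_1}^2$. Since each step of this procedure only computes inner products and subtracts integer multiples of one basis vector from the other, it is valid verbatim for the arbitrary inner product $\langle-,-\rangle$, not merely the Euclidean one. Next I would express the target in the reduced basis, writing $-\mathbf{v}_0=\beta_1\mathbf{c}_1+\beta_2\mathbf{c}_2$ with $\beta_1,\beta_2\in\mathbb{R}$, round each coordinate to the nearest integer, and examine the few integer lattice points adjacent to $(\lfloor\beta_1\rceil,\lfloor\beta_2\rceil)$ to select the closest one; translating back through $\mathbf{v}_0$ then yields the candidate minimal solution.

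The core of the argument is the correctness of this rounding-plus-search step. I would show that for a Gauss-reduced basis in dimension two the closest lattice vector to any target is always attained among the integer combinations obtained by independently rounding each coordinate and testing its immediate neighbours, so that the finite search returns the exact optimum rather than an approximation. This rests on the near-orthogonality guaranteed by the reduced basis, equivalently on controlling the shape of the fundamental (Voronoi) cell of $L$: the angle condition $2\lvert\langle\mathbf{c}_1,\mathbf{c}_2\rangle\rvert\le\norm{\mathbf{c}_1}^2$ keeps that cell regular enough that no farther integer combination can beat the rounded one. I expect this geometric step, establishing that exact two-dimensional CVP is solved by the reduced basis together with a bounded neighbourhood search, to be the main obstacle; the termination of Gaussian reduction, its validity for the general inner product, and the final back-substitution are comparatively routine.
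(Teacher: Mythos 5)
Your overall strategy coincides with the paper's: both recast the problem as a two-dimensional CVP instance for the lattice $\mathcal{L}=\mathbb{Z}\mathbf{b}_1+\mathbb{Z}\mathbf{b}_2$ (your determinant computation $x_1y_2-x_2y_1=2^p$ is correct, and is a point the paper leaves implicit), both run Gaussian reduction with respect to the given inner product, and both finish by expressing the target in the reduced basis and rounding. The problem is that you stop exactly where the paper's proof does its real work. The claim you defer --- that for a Gauss-reduced basis the rounded coefficients (or a bounded neighbourhood of them) realize the \emph{exact} closest lattice vector --- is the entire content of the theorem beyond bookkeeping, and announcing that you ``would show'' it while labelling it the main obstacle leaves the proof without its core. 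The paper closes this gap by a direct computation: writing the target as $\alpha_1\mathbf{u_1}+\alpha_2\mathbf{u_2}$ with $a_i=Round(\alpha_i)$ and comparing against an arbitrary competitor $b_1\mathbf{u_1}+b_2\mathbf{u_2}$, it expands the difference $d$ of squared norms, reduces via a telescoping trick to the case where only one coordinate differs, and shows $d\ge 0$ using precisely the two inequalities available after reduction, namely $2\lvert\langle\mathbf{u_1},\mathbf{u_2}\rangle\rvert\le\min\{\norm{\mathbf{u_1}}^2,\norm{\mathbf{u_2}}^2\}$ and $\lvert\alpha_i-a_i\rvert\le\tfrac12$. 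Any completion of your argument must carry out essentially this computation (or an equivalent Voronoi-cell estimate); it cannot be waved through as ``near-orthogonality keeps the cell regular enough.''

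A secondary difference: you propose testing the rounded point \emph{and its immediate neighbours}, whereas the paper asserts the rounded point alone is already optimal. Your more cautious variant is arguably safer at boundary cases (e.g.\ $\alpha_i-a_i=\pm\tfrac12$, which the paper handles by the convention $Round(\pm\tfrac12)=0$), but it does not evade the quantitative step --- you still need a proof that the optimum lies inside the finite set you search, which comes down to the same inequality. The remaining items in your outline (termination of the reduction via strictly decreasing norms, the fact that the reduced pair is still a $\mathbb{Z}$-basis of $\mathcal{L}$, validity of the reduction for an arbitrary inner product, and the final back-substitution) match the paper's proof and are indeed routine.
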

\begin{algorithm}

\begin{enumerate}
\item Set $\mathbf{u_1}\leftarrow \left[\begin{array}{c}
x_1\\y_1\\
\end{array}\right], \mathbf{u_2}\leftarrow \left[\begin{array}{c}
x_2\\y_2\\
\end{array}\right]$ and done $\leftarrow 0$;
\item While $\text{done} =0$, do
\begin{itemize}
 \item$c_1\leftarrow Round \left(\frac{\langle \mathbf{u_1}, \mathbf{u_2}\rangle}{\langle \mathbf{u_2}, \mathbf{u_2}\rangle}\right)$;\qquad $\mathbf{u_1}\leftarrow \mathbf{u_1}-c_1\mathbf{u_2}$;
 \item$c_2\leftarrow Round \left(\frac{\langle \mathbf{u_1}, \mathbf{u_2}\rangle}{\langle \mathbf{u_1}, \mathbf{u_1}\rangle}\right)$;\qquad $\mathbf{u_2}\leftarrow \mathbf{u_2}-c_2\mathbf{u_1}$;
 \item if $c_1=0$ and $c_2=0$, then $\text{done}\leftarrow 1$.
 \end{itemize}
\item Solve the equation $[\mathbf{u_1}, \mathbf{u_2}]\left[\begin{array}{c}
\alpha_1\\\alpha_2\\
\end{array}\right]=\left[\begin{array}{c}
x_0\\y_0\\
\end{array}\right]$
\item $a_1\leftarrow\left\lfloor\alpha_1\right\rfloor$, $a_2\leftarrow\left\lfloor\alpha_2\right\rfloor$;
\item $\left[\begin{array}{c}
x\\y\\
\end{array}\right]\leftarrow\left[\begin{array}{c}
x_0\\y_0\\
\end{array}\right]-a_1\mathbf{u_1}-a_2\mathbf{u_2}$
\end{enumerate}
Note: In this algorithm, we let $Round\left(\pm\frac{1}{2}\right)=0$.
\end{algorithm}

\begin{proof}

First we show that the algorithm terminates.

Without loss of generality, we can assume $c_1= Round \left(\frac{\langle \mathbf{u_1}, \mathbf{u_2}\rangle}{\langle \mathbf{u_2}, \mathbf{u_2}\rangle}\right)\neq 0$, and let $\frac{\langle \mathbf{u_1}, \mathbf{u_2}\rangle}{\langle \mathbf{u_2}, \mathbf{u_2}\rangle}=c_1+\varepsilon$,  where $-\frac{1}{2}\leq \varepsilon\leq\frac{1}{2}$.
Then \begin{equation*}
\begin{split}\parallel \mathbf{u_1}-c_1\mathbf{u_2}\parallel^2&=\parallel \mathbf{u_1}\parallel^2+c_1^2\parallel \mathbf{u_2}\parallel^2-2c_1\langle \mathbf{u_1}, \mathbf{u_2}\rangle\\
&=\parallel \mathbf{u_1}\parallel^2-\parallel \mathbf{u_2}\parallel^2\left(2c_1\frac{\langle \mathbf{u_1}, \mathbf{u_2}\rangle}{\langle \mathbf{u_2},\mathbf{u_2}\rangle}-c_1^2\right)\\
&=\parallel \mathbf{u_1}\parallel^2-\parallel \mathbf{u_2}\parallel^2\left(2c_1\varepsilon+c_1^2\right)
\end{split}
\end{equation*}

Case I: Suppose $c_1>0$, then $c_1\geq 1$. Since $Round\left(\frac{1}{2}\right)=0$, we have either $-\frac{1}{2}<\varepsilon<0$ or $\varepsilon\geq0$. Hence $c_1+2\varepsilon>0$, and $2c_1\varepsilon+c_1^2\ >0$ as well.

Case II: Suppose $c_1<0$, then $c_1 \leq-1 $. Since$Round\left(-\frac{1}{2}\right)=0$, we have either $0<\varepsilon<\frac{1}{2}$ or $\varepsilon\leq0$. Now $c_1+2\varepsilon<0$, so $2c_1\varepsilon+c_1^2\ >0$.

So by both of these two cases, we always have $2c_1\varepsilon+c_1^2\ >0$, i.e. $\parallel \mathbf{u_1}-c_1\mathbf{u_2}\parallel^2<\parallel \mathbf{u_1}\parallel^2$. By a similar argument we can get that $\parallel \mathbf{u_2}-c_2\mathbf{u_1}\parallel^2<\parallel \mathbf{u_2}\parallel^2$. That means $\norm{\mathbf{u_1}}$ and $\norm{\mathbf{u_2}}$ are strictly decreasing. Since there are only finite number of elements in $\mathcal{L}$ with norm less than $\text{max}\left\{\norm{\left[\begin{array}{c}
x_1\\y_1\\
\end{array}\right]},\norm{\left[\begin{array}{c}
x_2\\y_2\\
\end{array}\right]}\right\}$, the algorithm must terminate.

Furthermore when $c_1=c_2=0$, it's trivial to see that \begin{equation*}\left|\langle \mathbf{u_1}, \mathbf{u_2}\rangle\right|\leq \frac{1}{2}\text{min}\{\left|\langle \mathbf{u_1}, \mathbf{u_1}\rangle\right|,\left|\langle \mathbf{u_2}, \mathbf{u_2}\rangle\right|\}.\end{equation*}

Now we show that $\{\mathbf{u_1}, \mathbf{u_2}\}$ is a basis of $\mathcal{L}$. By the algorithm, it's easy to see that $\mathbf{u_1}$ and $\mathbf{u_2}$ are linear combinations of $\left[\begin{array}{c}
x_1\\y_1\\
\end{array}\right]$ and $\left[\begin{array}{c}
x_2\\y_2\\
\end{array}\right]$, so $\text{Span}_\mathbb{Z}(\mathbf{u_1},\mathbf{u_2})\subseteq \mathcal{L}$; on the other hand, every step of the algorithm is invertible, we also have  $\mathcal{L}\subseteq\text{Span}_\mathbb{Z}(\mathbf{u_1},\mathbf{u_2})$. Hence after terminating, $\mathcal{L}=\text{Span}_\mathbb{Z}(\mathbf{u_1},\mathbf{u_2})$.

Since $\mathbf{u_1}$, $\mathbf{u_2}$ are linearly independent over $\mathbb{R}$, there exist $\alpha_1,\alpha_2$ in $\mathbb{R}$, such that \begin{eqnarray*}\left[\begin{array}{c}
x_0\\y_0\\
\end{array}\right]=\alpha_1\mathbf{u_1}+\alpha_2\mathbf{u_2}.\end{eqnarray*}
Let $a_1$, $a_2 \in \mathbb{N}$ with $a_1=Round(\alpha_1)$ and $a_2=Round(\alpha_2)$, now we want to show that $\norm{\left[\begin{array}{c}
x_0\\y_0\\
\end{array}\right]-a_1\mathbf{u_1}-a_2\mathbf{u_2}}$ is minimized.
Let $z=b_1\mathbf{u_1}+b_2\mathbf{u_2}$ be an arbitrary vector in $\mathcal{L}$, and let\begin{equation}\label{eq:2} d=\norm{\left[\begin{array}{c}
x_0\\y_0\\
\end{array}\right]-z}^2 -\norm{\left[\begin{array}{c}
x_0\\y_0\\
\end{array}\right]-a_1\mathbf{u_1}-a_2\mathbf{u_2}}^2,\end{equation}we have the following cases:

Case 1: $a_1=b_1, a_2=b_2$, then $d=0$.

Case 2: $a_1\neq b_1, a_2=b_2$, then
\begin{eqnarray*}
d
&=&\norm{\left(\alpha_1-b_1\right)\mathbf{u_1}+\left(\alpha_2-b_2\right)\mathbf{u_2}}^2-\norm{\left(\alpha_1-a_1\right)\mathbf{u_1}+\left(\alpha_2-a_2\right)\mathbf{u_2}}^2\\
&=&(\alpha_1-b_1)^2\norm{\mathbf{u_1}}^2+(\alpha_2-b_2)^2\norm{\mathbf{u_2}}^2+2(\alpha_1-b_1)(\alpha_2-b_2)\langle \mathbf{u_1}, \mathbf{u_2}\rangle\\
&&-(\alpha_1-a_1)^2\norm{\mathbf{u_1}}^2-(\alpha_2-a_2)^2\norm{\mathbf{u_2}}^2-2(\alpha_1-a_1)(\alpha_2-a_2)\langle \mathbf{u_1}, \mathbf{u_2}\rangle\\
&=&\left(\alpha_1-b_1+\alpha_1-a_1\right)\left(a_1-b_1\right)\norm{\mathbf{u_1}}^2+2\left(a_1-b_1\right)\left(\alpha_2-a_2\right)\langle \mathbf{u_1}, \mathbf{u_2}\rangle\\
&\geq&2\left(\alpha_1-b_1+\alpha_1-a_1\right)\left(a_1-b_1\right)\left|\langle \mathbf{u_1}, \mathbf{u_2}\rangle\right|+2\left(a_1-b_1\right)\left(\alpha_2-a_2\right)\langle \mathbf{u_1}, \mathbf{u_2}\rangle.
\end{eqnarray*}If $a_1>b_1$, we have $ a_1-b_1>0,\quad 2\alpha_1-a_1-b_1\geq 1\geq \mid\alpha_2-a_2\mid,$ i.e. $d\geq0$;
if $a_1<b_1$, then $ a_1-b_1<0,\quad 2\alpha_1-a_1-b_1\leq -1$, but $\mid2\alpha_1-a_1-b_1\mid\geq \mid\alpha_2-a_2\mid,$ we still have $d\geq0$.

Case 3: $a_1= b_1, a_2\neq b_2$, this is the same as Case 2.

Case 4: $a_1\neq b_1, a_2\neq b_2$, then
\begin{eqnarray*}
d
&=&\norm{\left(\alpha_1-b_1\right)\mathbf{u_1}+\left(\alpha_2-b_2\right)\mathbf{u_2}}^2-\norm{\left(\alpha_1-b_1\right)\mathbf{u_1}+\left(\alpha_2-a_2\right)\mathbf{u_2}}^2\\
&&+\norm{\left(\alpha_1-b_1\right)\mathbf{u_1}+\left(\alpha_2-a_2\right)\mathbf{u_2}}^2-\norm{\left(\alpha_1-a_1\right)\mathbf{u_1}+\left(\alpha_2-a_2\right)\mathbf{u_2}}^2\\
&\geq&0.
\end{eqnarray*}
The inequality is because of Case 2 and 3.

Above all, the norm of the vector $\left[\begin{array}{c}
x_0\\y_0\\
\end{array}\right]-a_1\mathbf{u_1}-a_2\mathbf{u_2}$ is minimized.

 Since the vectors $\left[\begin{array}{c}
x_1\\y_1\\
\end{array}\right]$ and $\left[\begin{array}{c}
x_2\\y_2\\
\end{array}\right]$ are two solutions to the equation $zx\equiv y (\modop 2^p)$, all linear combinations of these two vectors are also solutions to this equation, in particular, $a_1\mathbf{u_1}+a_2\mathbf{u_2}$ is a solution; on the other hand $\left[\begin{array}{c}
x_0\\y_0\\
\end{array}\right]$ satisfies the equation $xz\equiv 2^qu+y (\modop 2^p)$. Hence the vector $\left[\begin{array}{c}
x_0\\y_0\\
\end{array}\right]-a_1\mathbf{u_1}-a_2\mathbf{u_2}$ is a solution to the equation $xz\equiv 2^qu+y (\modop 2^p)$, and it is minimal by previous result.

So, by all of the above arguments we can see that the algorithm can be used to find the minimal solution of equation ~\eqref{eq:3}.
\end{proof}

But only find the minimal solution of equation ~\eqref{eq:3} is still not enough to break the cryptosystem, because according to the system the solution should also satisfy $$\begin{cases} 0\leq x<2^m\\0\leq2^qu+y<2^m\\0\leq y<2^q ,\end{cases}$$ or, equivalently $$\begin{cases} 0\leq x<2^m=:B_1\\0\leq y<\text{min}\{2^q, 2^m-2^qu\}=:B_2.\end{cases}$$
To fix our algorithm, we define an inner product on $ \mathbb{R}^2$ by \begin{equation*}
\left\langle\left[\begin{array}{c}
a_1\\b_1\\
\end{array}\right],\left[\begin{array}{c}
a_2\\b_2\\
\end{array}\right]\right\rangle=a_1a_2+\left(\frac{B_1}{B_2}\right)^2b_1b_2.
\end{equation*}
It is easy to see that even with this new inner product, the proof in theorem 2 is still true.
Let $\{\mathbf{u_1}, \mathbf{u_2}\}$ be the minimal basis that we have found in the algorithm, and also let $\mathbf{v}=(x_0, y_0)^T$. Then we can write all four corners of the rectangle which is bounded by $\mathbf{v}$, $\mathbf{v}-(B_1, 0)^T$, $\mathbf{v}-(0, B_2)^T$ and $\mathbf{v}-(B_1, B_2)^T$ as real linear combinations of $\mathbf{u_1}$, $\mathbf{u_2}$, and then use this to find a lattice point within the bounding region (See the figure). By assumption, we know that there is at least one such lattice point; there could be more than one, but any one will solve the problem at hand.
\begin{figure}[h!]
  \centering
\begin{tikzpicture}[>=latex]
    \begin{scope}
    \clip (0,0) rectangle (10cm,10cm); 
    \pgftransformcm{1}{0.2}{0.7}{1.5}{\pgfpoint{3cm}{3cm}} 

    \draw[style=help lines,dashed] (-14,-14) grid[step=1.5cm] (14,14); 
\filldraw[fill=white, draw=black] (2,1.55) node [above left ] {$B_2$} --  (1.25,2.6) -- (2.6,2.45) node [black, right ] {$\mathbf{v}$} -- (3.35,1.35) -- cycle  node [ below left] {$B_1$};
    \foreach \x in {-7,-6,...,7}{                           
        \foreach \y in {-7,-6,...,7}{                       
        \node[draw,circle,inner sep=2pt,fill] at (1.5*\x,1.5*\y) {}; 
        }
    }
    \draw[ultra thick,red,->] (0,0) -- (0,1.5)  node [above ] {$\mathbf{u_1}$};
    \draw[ultra thick,red,->] (0,0) -- (1.5,0) node [below right] {$\mathbf{u_2}$};
    \node (O) at (0,0) {};
    \end{scope}
\draw[->,thick] (O) -- ++(0,6);
\draw[->,thick] (O) -- ++(6,0);
\end{tikzpicture}

\end{figure}

An example based on our algorithm is as follows,

\begin{example}
Alice chooses her secret key $X=12345$, which is 14-bit long, Bob and she agree on some common integers
$Z=6173$, $q=5$ and $p=22$, then by Bouftass's protocol, Alice needs to send the number
 \begin{equation*}
 U=\left\lfloor\frac{(XZ)\mathrm{mod}(2^p)}{2^q}\right\rfloor=708192
 \end{equation*}to Bob.

 To recover Alice's secret key $X$, Eve can use the above algorithm to get:
 \begin{equation*}
 x_0=115,\qquad y_0=1703,\qquad \mathbf{u_1}=\left[\begin{array}{c}
-25140\\28\\
\end{array}\right] \qquad\text{and}\qquad \mathbf{u_2}=\left[\begin{array}{c}
-33973\\-129\\
\end{array}\right].
\end{equation*}
By computing the four corners, $\mathbf{v}=13.790\mathbf{u_1}-10.208\mathbf{u_2}$, $\mathbf{v}-(B_1, 0)^T=14.252\mathbf{u_1}-10.108\mathbf{u_2}$, $\mathbf{v}-(0, B_2)^T=13.531\mathbf{u_1}-10.016\mathbf{u_2}$ and $\mathbf{v}-(B_1, B_2)^T=13.992\mathbf{u_1}-9.916\mathbf{u_2}$, Eve will find that \begin{equation*}\left[\begin{array}{c}
x\\y\\
\end{array}\right]=\left[\begin{array}{c}
115\\1703\\
\end{array}\right]-14\left[\begin{array}{c}
-25140\\28\\
\end{array}\right]+10\left[\begin{array}{c}
-33973\\-129\\
\end{array}\right]=\left[\begin{array}{c}
12345\\21\\
\end{array}\right],\end{equation*} i.e. $x=12345$ and $y=21$.

\end{example}

\section{Conclusion}

In this paper, we provide a practical attack to Bouftass's cryptosystem based on Gaussian lattice reduction. Our attack is simple and fast, it works when the conditions $l+m=p+q$ and $p>m+q$ are satisfied. We proved that our algorithm can definitely find a solution to the equation $u=\left\lfloor\frac{(xz)\mathrm{mod}(2^p)}{2^q}\right\rfloor$ , but the solution is not necessarily unique.

We also remark that a similar approach using LLL algorithm seems to work in practice, but the method presented here admitted an easier proof.

\end{document}